\tikzset{fontscale/.style = {font=\relsize{#1}}}
\newcommand{\Z}{\mathbb{Z}} 
\newcommand{\Q}{\mathbb{Q}} 
\newcommand{\F}{\mathbb{F}} 
\newcommand{\g}{\mathfrak{g}} 
\newcommand{\lk}{\mathfrak{l}} 
\newcommand{\Cl}{\operatorname{Cl}}
\newcommand{\OO}{\mathcal{O}}
\newcommand{\OK}{{\mathcal{O}_K}}
\newcommand{\ag}{\mathfrak{a}} 
\newcommand{\bg}{\mathfrak{b}} 
\newcommand{\pg}{\mathfrak{p}} 
\newcommand{\Nm}{\mathcal{N}}
\newcommand{\disc}{\operatorname{disc}}
\newcommand{\End}{\operatorname{End}}
\newcommand{\ket}[1]{|#1\rangle}
\newtheorem{heuristic}{Heuristic}
\newtheorem{corrolary}{Corrolary}
\begin{document}
\mainmatter              
\title{A note on the security of CSIDH}
\titlerunning{On the security of CSIDH}  
%
\author{Jean-Fran\c{c}ois Biasse\inst{1} \and
Annamaria Iezzi\inst{1} \and Michael J. Jacobson, Jr. \inst{2}}
%
%
%
\institute{
Department of Mathematics and Statistics \\
University of South Florida\\
\email{\{biasse,aiezzi\}@usf.edu}
\and Department of Computer Science \\
University of Calgary\\
\email{jacobs@ucalgary.ca}
}

\maketitle              

\begin{abstract}
We propose an algorithm for computing an isogeny between two elliptic curves $E_1,E_2$ 
defined over a finite field such that there is an imaginary quadratic order $\OO$ 
satisfying $\OO\simeq \End(E_i)$ for $i = 1,2$. This concerns ordinary curves and 
supersingular curves defined over $\F_p$ (the latter used in the recent CSIDH proposal). Our 
algorithm has heuristic asymptotic run time $e^{O\left(\sqrt{\log(|\Delta|)}\right)}$ and 
requires polynomial quantum memory and $e^{O\left(\sqrt{\log(|\Delta|)}\right)}$ classical memory, 
where $\Delta$ is the discriminant of 
$\OO$. This asymptotic complexity outperforms all other available method for computing isogenies. 

We also show that a variant of our method has asymptotic run time 
$e^{\tilde{O}\left(\sqrt{\log(|\Delta|)}\right)}$ while requesting only polynomial memory 
(both quantum and classical). 

\end{abstract}
\section{Introduction}

Given two elliptic curves $E_1,E_2$ defined over a finite field $\F_q$, the isogeny problem 
is to compute an isogeny $\phi:E_1\rightarrow E_2$,  i.e.\ a non-constant morphism that maps the identity point on $E_1$ to the identity point on $E_2$. 
There are two different types of elliptic curves: ordinary and supersingular. 
The latter have very particular properties that impact the resolution of the isogeny problem. 
The first instance of a cryptosystem based on the hardness of computing isogenies was due to Couveignes~\cite{couveignes}, and 
its concept was independently rediscovered by Stolbunov~\cite{Stolbunov}. Both proposals used ordinary curves. 

Childs, Jao and Soukharev observed in~\cite{jao_souk_quantum} that the problem 
of finding an isogeny between two ordinary curves $E_1,E_2$ defined over $\F_q$ and having the same endomorphism ring could be reduced 
to the problem of finding a subgroup of the dihedral group. More 
specifically, let $K = \Q(\sqrt{t^2 - 4q})$ where $t$ is the trace of the Frobenius endomorphism of the curves, 
and let $\OO\subseteq K$ be the quadratic order isomorphic to the ring of 
endomorphisms of $E_1$ and $E_2$. Let $\Cl(\OO)$ be the 
ideal class group of $\OO$. Classes of ideals act on isomorphism classes of curves with endomorphism 
ring isomorphic to $\OO$. Finding an isogeny between $E_1$ and $E_2$ boils down to finding an ideal $\ag$ 
such that $[\ag]* \overline{E}_1 = \overline{E}_2$ where $*$ is the action of $\Cl(\OO)$, $[\ag]$ is the 
class of $\ag$ in $\Cl(\OO)$ and $\overline{E}_i$ is 
the isomorphism class of the curve $E_i$. Childs, Jao and Soukharev showed that this could be done by finding a subgroup 
of  $\Z_2 \ltimes \Z_N$, where $N = \#\Cl(\OO)\sim \sqrt{|t^2 - 4q|}$. Using Kuperberg's sieve~\cite{Kuperberg2}, this requires 
$2^{O\left(\sqrt{\log(N)}\right)}$ queries to an oracle that computes the action of the class of an ideal $\ag$. 
Childs et al. used a method with complexity in $2^{\tilde{O}\left(\sqrt{\log(N)}\right)}$ to evaluate this oracle, meaning that the total cost is 
$2^{\tilde{O}\left(\sqrt{\log(N)}\right)}$. 

To avoid this subexponential attack, Jao and De Feo~\cite{jao_defeo_short} described an analogue of these 
isogeny-based systems that works with supersingular curves. The endomorphism ring of such curves is a maximal 
order in a quaternion algebra. The non-commutativity of the (left)-ideals acting on isomorphism classes of curves 
thwarts the attacks mentioned above, but it also restricts the possibilities offered by supersingular isogenies, which 
are typically used for a Diffie-Hellman type of key exchange (known as SIDH) and for digital signatures. Most recently, 
two independent works revisited isogeny-based cryptosystems by restricting themselves to cases where the subexponential 
attacks based on the action of $\Cl(\OO)$ was applicable. The scheme known as CSIDH by Castryck et al.~\cite{CSIDH} 
uses supersingular curves and isogenies defined over $\F_p$, while the scheme of De Feo, Kieffer and Smith~\cite{ordinary_graphs} 
uses ordinary curves with many practical optimizations. In both cases, the appeal of using 
commutative structures is to allow more functionalities, such as static-static key exchange protocols that are not 
possible with SIDH without an expensive Fujisaki-Okamoto transform~\cite{SIDH_static}. 
The downside is that one needs to carefully assess the hardness of the computation of isogenies in 
this context. 

\paragraph{\textbf{Contribution.}} Let $E_1,E_2$ be two elliptic curves 
defined over a finite field such that there is an imaginary quadratic order $\OO$ 
satisfying $\OO\simeq \End(E_i)$ for $i = 1,2$ and $\Delta = \disc(\OO)$. 
In this note, we provide new insight 
regarding the security of CSIDH as follows:
\begin{enumerate}
\item We describe a quantum algorithm for computing an isogeny between $E_1$ and $E_2$ with heuristic 
 asymptotic run time in $e^{O\left(\sqrt{\log(|\Delta|)}\right)}$ and 
with quantum memory in $\operatorname{Poly}\left(\log(|\Delta|)\right)$ and quantumly accessible classical 
memory in $e^{O\left(\sqrt{\log(|\Delta|)}\right)}$. 
\item We show that we can use a variant of this method to compute an isogeny between $E_1$ and $E_2$ 
in time  $e^{\left(\frac{1}{\sqrt{2}}+o(1)\right)\sqrt{\ln(|\Delta|)\ln\ln(|\Delta|)}}$ with polynomial memory (both classical 
and quantum).
\end{enumerate}
Our contributions bear similarities to the recent independent work of 
Bonnetain and Schrottenloher~\cite{Bonnetain}. 
The main differences are that they rely on a generating set $\lk_1,\ldots,\lk_u$ where $u\in \Theta(\log(|\Delta|))$ 
of the class group provided with the CSIDH protocol, and 
that their approach does not have asymptotic 
run time in $e^{O\left(\sqrt{\log(|\Delta|)}\right)}$. The asymptotic 
run time of the method of~\cite{Bonnetain} was not analyzed. However, 
using similar techniques as the ones presented in this paper, one could 
conclude that the run time of the method of~\cite{Bonnetain} is in $e^{\tilde{O}\left(\sqrt{\log(|\Delta|)}\right)}$. 
The extra logarithmic factors in the exponent come from the use of the BKZ reduction~\cite{BKZ} 
with block size in $O\left(\sqrt{\log(|\Delta|)}\right)$ in a lattice of $\Z^u$. Section~\ref{sec:quantum_oracle} 
elaborates on the differences between our algorithm and~\cite{Bonnetain}. The run time of the variant described in 
Contribution~2 is asymptotically comparable to that of the algorithm of Childs, Jao and Soukharev~\cite{jao_souk_quantum}, 
and to that of Bonnetain and Schrottenloher~\cite{Bonnetain} (if its exact time complexity was to be worked out). 
The main appeal of our variant is the fact that it uses a polynomial amount of memory, which is likely to impact the 
performances in practice. 


\section{Mathematical background} 

An elliptic curve defined over a finite field $\F_q$ of characteristic $p\neq
2,3$ is an algebraic variety given by an equation of the form $E:\ y^2 = x^3 + ax + b$, 
where $a$, $b \in \mathbb F_q$ and  $4a^3 + 27b^2 \neq 0$. A more general form gives an
affine model in the case $p=2,3$ but it is not useful in the scope of
this paper since we derive an asymptotic result. The set of points of
an elliptic curve can be equipped with an additive group law. Details
about the arithmetic of elliptic curves can be found in many
references, such as~\cite[Chap. 3]{silverman_elliptic_curves}.

Let $E_1,E_2$ be two elliptic curves defined over $\F_q$. An isogeny
$\phi\colon E_1\to E_2$ is a non-constant rational map defined over
$\F_q$ which is also a group homomorphism from $E_1$ to $E_2$. Two
curves are isogenous over $\F_q$ if and only if they have the same
number of points over $\F_q$ (see~\cite{tate}). Two curves over $\F_q$
are said to be isomorphic over $\overline{\F}_q$ if there is an
$\overline{\F}_q$-isomorphism between their group of points. Two such
curves have the same $j$-invariant given by $j:= 1728\frac{4a^ 3}{4a^
  3 + 27b^ 2}$.  In this paper, we treat isogenies as mappings between
(representatives of) $\overline{\F}_q$-isomorphism classes of elliptic
curves. In other words, given two $j$-invariants $j_1,j_2\in\F_q$, we
wish to construct an isogeny between (any) two elliptic curves
$E_1,E_2$ over $\F_q$ having $j$-invariant $j_1$ (respectively
$j_2$). Such an isogeny exists if and only if $\Phi_\ell(j_1,j_2)=0$
for some $\ell$, where $\Phi_\ell(X,Y)$ is the $\ell$-th modular
polynomial.

Let $E$ be an elliptic curve defined over $\F_q$. An isogeny between
$E$ and itself defined over $\F_{q^n}$ for some $n>0$ is called an
endomorphism of $E$.  The set of endomorphisms of $E$ is a ring that
we denote by $\End(E)$. For each integer $m$, the multiplication-by-$m$ map $[m]$ on $E$
is an endomorphism. Therefore, we always have $\Z\subseteq \End(E)$.
Moreover, to each isogeny $\phi\colon E_1\to E_2$ corresponds an
isogeny $\widehat{\phi}\colon E_2\to E_1$ called its dual
isogeny. It satisfies $\phi\circ\widehat{\phi}=[m]$ where $m =
\deg(\phi)$. For elliptic curves defined over a finite field, we know that
$\Z\subsetneq\End(E)$. In this particular case, $\End(E)$ is either an
order in an imaginary quadratic field (and has $\Z$-rank 2) or an
order in a quaternion algebra ramified at $p$ (the characteristic of the base field) and $\infty$ (and has
$\Z$-rank 4). In the former case, $E$ is said to be ordinary while in
the latter it is called supersingular. When a supersingular curve 
is defined over $\F_p$, 
then the ring of its $\F_p$-endomorphisms is isomorphic to an imaginary 
quadratic order, much like in the ordinary case. 

An order $\OO$ in a field $K$ such that $[K:\Q]=n$ is a subring of $K$
which is a $\Z$-module of rank $n$. The notion of ideal of $\OO$ can
be generalized to fractional ideals, which are sets of the form $\ag =
\frac{1}{d}I$ where $I$ is an ideal of $\OO$ and $d\in\Z_{>0}$. The
invertible fractional ideals form a multiplicative group
$\mathcal{I}$, having a subgroup consisting of the invertible
principal ideals $\mathcal{P}$. The ideal class group $\Cl(\OO)$ is by
definition $\Cl(\OO):= \mathcal{I}/\mathcal{P}$. In $\Cl(\OO)$, we
identify two fractional ideals $\ag,\bg$ if there is $\alpha\in K$
such that $\bg = (\alpha)\ag$. We denote by $[\ag]$ the 
class of the fractional ideal $\ag$ in $\Cl(\OO)$. 
The ideal class group is finite and its
cardinality is called the class number $h_\OO$ of $\OO$. For a
quadratic order $\OO$, the class number satisfies $h_\OO \leq
\sqrt{|\Delta|}\ln(|\Delta|)$, where $\Delta$ is the discriminant of $\OO$.

The endomorphism ring of an elliptic curve plays a crucial role in
most algorithms for computing isogenies between curves. The class
group of $\End(E)$ acts transitively on isomorphism classes of
elliptic curves (that is, on $j$-invariants of curves) having the same
endomorphism ring. More precisely, the class of an ideal
$\ag\subseteq\OO$ acts on the isomorphism class of a curve $E$ with
$\End(E)\simeq \OO$ via an isogeny of degree $\Nm(\ag)$ (the algebraic
norm of $\ag$).  Likewise, each isogeny $\varphi\colon E\to E'$ where
$\End(E)=\End(E')\simeq \OO$ corresponds (up to isomorphism) to the
class of an ideal in $\OO$.  From an ideal $\ag$ and the $\ell$-torsion
(where $\ell=\Nm(\ag)$), one can recover the kernel of $\varphi$, and
then using V\'{e}lu's formulae~\cite{velu}, one can derive the
corresponding isogeny. We denote by $[\ag]*\overline{E}$ the action of the 
ideal class of $\ag$ on the isomorphism class of the curve $E$. 
The typical strategy to evaluate the action 
of $[\ag]$ is to decompose it as a product of classes of prime ideals of 
small norm, and evaluate the action of each prime ideals as $\ell$-isogenies. 
This strategy was first described by Couveignes~\cite{couveignes} and later by Br\"{o}ker-Charles-Lauter~\cite{BCL} 
and reused in many subsequent works.
\medskip

\noindent\textbf{Notation: } In this paper, $\log$ denotes the base 2 logarithm 
while $\ln$ denotes the natural logarithm. 

\section{The CSIDH static-static key exchange}

As pointed out in~\cite{GPST16}, the original SIDH key agreement 
protocol is not secure when using the same secret over multiple 
instances of the protocol. This can be fixed by a Fujisaki-Okamoto transform~\cite{SIDH_static} 
at the cost of a drastic loss of performance, requiring additional points in the protocol, and loss of flexibility, for example, the inability to reuse keys.  These issues motivated the 
description of CSIDH~\cite{CSIDH} which uses supersingular curves defined 
over $\F_p$. 

When Alice and Bob wish to create a shared secret, they rely on their 
long-term secrets $[\ag]$ and $[\bg]$ which are classes of ideals in 
the ideal class group of $\OO$, where $\OO$ is isomorphic to the 
$\F_p$-endomorphism ring of a supersingular curve $E$ defined over 
$\F_p$. Much like the original Diffie-Hellman protocol~\cite{DH}, 
Alice and Bob proceed as follow: 
\begin{itemize}
 \item Alice sends $[\ag]*\overline{E}$ to Bob. 
 \item Bob sends $[\bg]*\overline{E}$ to Alice.
\end{itemize}
Then Alice and Bob can separately recover their shared secret
$$[\ag\bg]*\overline{E} = [\bg]*[\ag]*\overline{E} = [\ag]*[\bg]*\overline{E}.$$
The existence of a quantum subexponential attack forces the use of larger keys 
for the same level of security, which is partly compensated by the fact that 
elements are represented in $\F_p$, and are thus more compact.  Recommended parameter sizes and attack costs 
from~\cite{CSIDH} for 80, 128, and 256 bit security are listed in Table~\ref{tab:ab:SIDH_security}.

\begin{table}[ht]
\caption{\label{tab:ab:SIDH_security} Claimed security of CSIDH~\cite[Table 1]{CSIDH}.}
\begin{center}
\begin{tabular}{|c|c||c|c|}
\hline
NIST & $\log(p)$ & Cost quantum attack  & Cost classical attack  \\
\hline
1 & 512  &  $2^{62}$  &  $2^{128}$  \\ 
3 & 1024  &  $2^{94}$  &  $2^{256}$  \\ 
5 & 1792  & $2^{129}$   &  $2^{448}$ \\
\hline
\end{tabular}
\end{center}
\end{table}

\section{Asymptotic complexity of isogeny computation}\label{sec:quantum_space}

In this section, we show how to combine the general framework for computing isogenies between 
curves whose endomorphism ring is isomorphic to a quadratic order (due to Childs, Jao and Soukharev~\cite{jao_souk_quantum} in the 
ordinary case and to Biasse Jao and Sankar in the supersingular case~\cite{INDOCRYPT_14}) with 
the efficient evaluation of the class group action of Biasse, Fieker and Jacobson~\cite{ANTS_XII} to produce a quantum algorithm 
that finds an isogeny between $E_1,E_2$. We give two variants of our method: 
\begin{itemize}
 \item Heuristic time complexity $2^{O\left(\log(|\Delta|)\right)}$, polynomial quantum memory and 
 quantumly accessible classical memory in $2^{O\left(\log(|\Delta|)\right)}$.
 \item Heuristic time complexity $e^{\left(\frac{1}{\sqrt{2}}+o(1)\right)\sqrt{\ln(|\Delta|)\ln\ln(|\Delta|)}}$ with polynomial memory (both classical 
and quantum). 
\end{itemize}

\subsection{Isogenies from solutions to the Hidden Subgroup Problem}

As shown in~\cite{INDOCRYPT_14,jao_souk_quantum}, the computation of an isogeny between 
$E_1$ and $E_2$ such that there is an imaginary quadratic order with $\OO\simeq \End(E_i)$ for 
$i = 1,2$ can be done by exploiting the action of the ideal class group of $\OO$ on isomorphism 
classes of curves with endomorphism ring isomorphic to $\OO$. In particular, this concerns the 
cases of 
\begin{itemize}
 \item ordinary curves, and 
 \item supersingular curves defined over $\F_p$. 
\end{itemize}

Assume we are looking for $\ag$ such that $[\ag]*\overline{E}_1 = \overline{E}_2$. Let $A = \Z/d_1\Z\times \cdots \times \Z/d_k\Z\simeq \Cl(\OO)$ 
be the elementary decomposition of $\Cl(\OO)$. 
Then we define 
a quantum oracle on $\Z/2\Z\ltimes A$ by 
\begin{equation}\label{eq:HSP_oracle_A}
 f(x,y):= \left\{ \begin{array}{ll}
         \ket{[\ag_y]*\overline{E}_1}  & \mbox{if $x=0$},\\
        \mbox{$\ket{[\ag_y]*\overline{E}_2}$} & \mbox{if $x =1$},\end{array} \right.
\end{equation}
where $[\ag_y]$ is the element of $\Cl(\OO)$ corresponding to $y\in A$ via the isomorphism $\Cl(\OO)\simeq A$. 
Let $H$ be the subgroup of $\Z/2\Z\ltimes A$ of the periods of $f$. This means that $f(x,y) = f(x',y')$ if and 
only if $(x,y)-(x',y')\in H$. Then $H = \{(0,0) , (1 , s)\}$ where $s\in A$ such that 
$[\ag_s]*\overline{E}_1 = \overline{E}_2$. The computation of $s$ can thus be done through the 
resolution of the Hidden Subgroup Problem in $\Z/2\Z\ltimes A$. In~\cite[Appendix A]{jao_souk_quantum}, Childs, Jao and 
Soukharev generalized the subexponential-time polynomial space dihedral HSP algorithm of Regev~\cite{regev} to the 
case of an arbitrary Abelian group $A$. Its run time is in $e^{\left(\sqrt{2}+o(1)\right)\sqrt{\ln(|A|)\ln\ln(|A|)}}$ 
with a polynomial memory requirement. 
Kuperberg~\cite{Kuperberg2} describes a family of algorithms, one of which has running time in $e^{O\left(\sqrt{\log(|A|)}\right)}$ 
while requiring polynomial quantum memory and $e^{O\left(\sqrt{\log(|A|)}\right)}$ quantumly accessible 
classical memory. The high-level approach for finding an isogeny from the dihedral HSP 
is skteched in Algorithm~\ref{alg:HSP}.

 \begin{algorithm}[ht]
   \caption{Quantum algorithm for finding the action in $\Cl(\OO)$}
\begin{algorithmic}[1]\label{alg:HSP}
\REQUIRE Elliptic curves $E_1,E_2$, imaginary quadratic order $\OO$ such that $\End(E_i)\simeq \OO$ for $i=1,2$
such that there is $[\ag]\in\Cl(\OO)$ satisfying 
$[\ag]*\overline{E}_1 = \overline{E}_2$.
\ENSURE $[\ag]$
\STATE Compute $A = \Z/d_1\Z\times \cdots \times\Z/d_k \Z$ such that $A\simeq \Cl(\OO)$. 
\STATE Find $H =  \{(0,0) , (1 , s)\}$ by solving the HSP in $\Z/2\Z\ltimes A$ with oracle~\eqref{eq:HSP_oracle_A}.
\RETURN $[\ag_s]$
\end{algorithmic}
\end{algorithm}

\begin{proposition}\label{prop:HSP_regev}
Let $N = \#\Cl(\OO)\sim \sqrt{|\Delta|}$. Algorithm~\ref{alg:HSP} is correct and requires:
\begin{itemize}
 \item $e^{O\left( \sqrt{\log(N)}\right)}$ queries 
to the oracle defined by~\eqref{eq:HSP_oracle_A} 
while requiring a
$\operatorname{Poly}(\log(N))$ quantum memory overhead and $e^{O\left( \sqrt{\log(N)}\right)}$ quantumly accessible classical memory overhead when 
using Kuperberg's second dihedral HSP algorithm~\cite{Kuperberg2} in Step~2. 
\item $e^{\left(\sqrt{2}+o(1)\right)\sqrt{\ln(|N|)\ln\ln(|N|)}}$ queries to the oracle defined by~\eqref{eq:HSP_oracle_A} while requiring only 
polynomial memory overhead when using the dihedral HSP method ofusing~\cite[Appendix A]{jao_souk_quantum} in Step~2. 
\end{itemize}
\end{proposition}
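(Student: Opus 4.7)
The plan is to verify correctness of Algorithm~\ref{alg:HSP} and then apply the two cited dihedral-type HSP solvers to obtain the complexity bounds.

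For correctness, there are two things to justify. Step~1 of the algorithm, namely the computation of the elementary divisor decomposition $A \simeq \Cl(\OO)$, can be carried out in quantum polynomial time by applying Shor-type algorithms for computing the structure of a finite abelian black-box group to $\Cl(\OO)$, whose elements are represented by reduced ideals on $O(\log |\Delta|)$ bits and whose group law is efficiently computable (Hallgren-style). Once $A$ and the isomorphism are fixed, the oracle~\eqref{eq:HSP_oracle_A} is well defined on $\Z/2\Z \ltimes A$, where the $\Z/2\Z$ factor acts on $A$ by negation, as in the standard dihedral HSP reduction.

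Next, I identify the hidden subgroup $H$ of $f$. Because $\Cl(\OO)$ acts freely and transitively on the set of isomorphism classes of curves with endomorphism ring $\OO$, distinct ideal classes produce distinct $j$-invariants; hence $f(0,y) = f(0,y')$ forces $y = y'$, and similarly for $f(1,\cdot)$. For the cross-term $f(0,y) = f(1,y')$, writing $\overline{E}_2 = [\ag] * \overline{E}_1$ reduces the equality to $[\ag_y] = [\ag_{y'}][\ag]$, so there is a single $s \in A$, corresponding to $[\ag]$ via the fixed isomorphism $\Cl(\OO) \simeq A$, controlling all such coincidences. Combined with the semidirect-product group law, this yields exactly $H = \{(0,0), (1,s)\}$, and recovering $H$ recovers $[\ag]$ (up to a harmless sign arising from the negation action, since the inverse class also witnesses the isogeny).

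For the two complexity claims, I invoke the cited solvers on the HSP instance on $\Z/2\Z \ltimes A$ with $|A| = N$. Kuperberg's second algorithm~\cite{Kuperberg2} settles the first bullet with $e^{O(\sqrt{\log N})}$ oracle queries, $\operatorname{Poly}(\log N)$ quantum workspace, and $e^{O(\sqrt{\log N})}$ quantumly accessible classical memory. The Childs--Jao--Soukharev generalisation of Regev's algorithm~\cite[Appendix~A]{jao_souk_quantum} settles the second bullet with $e^{(\sqrt{2}+o(1))\sqrt{\ln N \ln\ln N}}$ oracle queries in polynomial memory. I expect the only real obstacle to be the bookkeeping that aligns the semidirect-product convention with the oracle so that the hidden subgroup has exactly the form $\{(0,0),(1,s)\}$; once this is settled, the two complexity statements are direct applications of known HSP machinery to the instance at hand.
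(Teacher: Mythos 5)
Your proposal is correct and follows essentially the same route the paper takes: the paper states (without a displayed proof) the identification $H = \{(0,0),(1,s)\}$ with $[\ag_s]*\overline{E}_1 = \overline{E}_2$ in the discussion surrounding Algorithm~1, and then attributes the two query/memory bounds directly to Kuperberg's second algorithm and to the Childs--Jao--Soukharev generalization of Regev's solver, exactly as you do. Your fleshing-out of the free-and-transitive action argument (giving injectivity on each slice and a unique shift $s$ in the cross-term), and the remark that Step~1 is quantum polynomial time via abelian group structure computation, are the natural details the paper leaves implicit. One small imprecision: the inverse class $[\ag^{-1}]$ witnesses the isogeny in the opposite direction $\overline{E}_2 \to \overline{E}_1$, not the same one, so the "harmless sign" is harmless only because recovering either of $s$ or $-s$ determines the other; this does not affect the conclusion.
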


\begin{remark}
Algorithm~\ref{alg:HSP} only returns the ideal class $[\ag]$ whose action on $\overline{E}_1$ gives us $\overline{E}_2$. 
This is all we are interested in as far as the analysis of isogeny-based cryptosystems goes. However, this is 
not technically an isogeny between $E_1$ and $E_2$. We can use this ideal to derive an actual isogeny by evaluating 
the action of $[\ag]$ using the oracle of Section~\ref{sec:quantum_oracle} together with the method of~\cite[Alg. 4.1]{BCL}. 
This returns an isogeny $\phi:E_1\rightarrow E_2$ as a composition of isogenies of small degree $\phi = \prod_i \phi_i^{e_i}$ without 
increasing the time complexity. Also note that the output fits in polynomial space if the product is not evaluated, 
otherwise, it needs $2^{\tilde{O}\left( \sqrt[3]{\log(N)}\right)}$ memory. 
\end{remark}

\subsection{The quantum oracle}\label{sec:quantum_oracle}

To compute the oracle defined in~\eqref{eq:HSP_oracle_A}, Childs, Jao and Soukharev~\cite{jao_souk_quantum} used a purely classical 
subexponential method deriving from the general subexponential class group computation algorithm of Hafner and McCurley~\cite{hafner}.  
This approach, mentioned in \cite{CSIDH}, was first suggested by Couveignes \cite{couveignes}. In a recent independent 
work~\cite{Bonnetain}, 
Bonnetain and Schrottenloher used a method that bears similarities with our oracle described in 
this section. They combined a quantum algorithm for computing the class group with classical methods from Biasse, Fieker 
and Jacobson~\cite[Alg. 7]{ANTS_XII} for evaluating the action of $[\ag]$ with a precomputation of $\Cl(\OO)$. 
More specifically, let 
$\lk_1,\ldots,\lk_u$ be prime ideals used to create the long term secret $\ag$ of Alice. This means that there are (small) 
$(e_1,\ldots,e_u)\in\Z^u$ such that $\ag = \prod_i \lk_i^{e_i}$. Let $\mathcal{L}$ be the lattice of exponent vectors 
$(f_1,\ldots,f_u)$ such that $\prod_i \lk_i^{f_i} = (\alpha)$ for some $\alpha$ in the field of fractions of $\OO$. In other 
words, the ideal class $\left[\prod_i \lk_i^{f_i}\right]$ is the neutral element of $\Cl(\OO)$. The high-level approach used 
in~\cite{Bonnetain} deriving from~\cite[Alg. 7]{ANTS_XII} is the following: 
\begin{enumerate}
 \item Compute a basis $B$ for $\mathcal{L}$.
 \item Find a BKZ-reduced basis $B'$ of $\mathcal{L}$. 
 \item Find $(h_1,\ldots,h_u)\in\Z^u$ such that $[\ag] =\left[\prod_i\lk_i^{h_i}\right]$.
 \item Use Babai's nearest plane method on $B'$ to find short $(h'_1,\ldots,h'_u)\in\Z^u$ such that $[\ag] =\left[\prod_i\lk_i^{h'_i}\right]$.
 \item Evaluate the action of $\left[\prod_i\lk_i^{h'_i}\right]$ on $\overline{E}_1$ by applying the action of the $\lk_i$ for $i=1,\ldots,u$.  
\end{enumerate}
Steps~1 and~2 can be performed as a precomputation. Step~1 takes quantum polynomial time by using standard techniques for solving 
an instance of the Hidden Subgroup Problem in $\Z^u$ where $u$  
satisfies $p = 4l_1\cdots l_u - 1$ for small primes $l_1,\ldots,l_u$.

The oracle of Childs, Jao and Soukharev~\cite{jao_souk_quantum} has asymptotic time complexity in $2^{\tilde{O}\left( \sqrt{\log(|\Delta|)}\right)}$ 
and requires subexponential space due to the need for the storage of 
a relation matrix of subexponential dimension. The oracle of Bonnetain and Schrottenloher~\cite{Bonnetain} relies on 
BKZ~\cite{BKZ} lattice reduction 
in a lattice in $\Z^u$.
Typically, $u\in \Theta(\log(p)) = \Theta(\log(|\Delta|))$, since $\sum_{q\leq l}\log(q) \in \Theta(l)$. 
In addition to not having a proven space complexity bound, the complexity of BKZ cannot be in $e^{\tilde{O}\left( \sqrt{\log(|\Delta|)}\right)}$ 
unless the block size is at least in $\Theta\left(\sqrt{\log(|\Delta|)}\right)$, which forces the overall complexity to be at best  
in $e^{\tilde{O}\left( \sqrt{\log(|\Delta|)}\right)}$.

Our strategy differs from that of Bonnetain and Schrottenloher on the following points: 
\begin{itemize}
 \item Our algorithm does not require the basis $\lk_1,\ldots,\lk_u$ provided with CSIDH. 
 \item The complexity of our oracle is in $e^{\tilde{O}\left(\sqrt[3]{\log(|\Delta|)}\right)}$ 
 (instead of $e^{\tilde{O}\left(\sqrt{\log(|\Delta|)}\right)}$ for the 
 method of~\cite{Bonnetain}), thus leading to an overall 
 complexity of $e^{O\left(\sqrt{\log(|\Delta|)}\right)}$ (instead of $e^{\tilde{O}\left(\sqrt{\log(|\Delta|)}\right)}$ for the 
 method of~\cite{Bonnetain}).
 \item We specify the use of a variant of BKZ with a proven poly-space complexity.
\end{itemize}
To avoid the dependence on the parameter $u$, we need to rely on the heuristics stated by Biasse, Fieker and Jacobson~\cite{ANTS_XII} on the 
connectivity of the Caley graph of the ideal class group when a set of edges is $S = \{ \pg \in \operatorname{Poly}(\log(|\Delta|))\}$ 
with $\#S \leq \log(|\Delta|)^{2/3}$ where $\Delta$ is the discriminant of $\OO$. By assuming~\cite[Heuristic 2]{ANTS_XII}, we state that 
each class of $\Cl(\OO)$ has a representation over the class of ideals in $S$ with exponents less than $e^{\log^{1/3}(|\Delta|)}$. 
A quick calculation shows that there are asymptotically many more such products than ideal classes, but their distribution is not well 
enough understood to conclude that all classes decompose over $S$ with a small enough exponent vector. Numerical experiments reported 
in~\cite[Table 2]{ANTS_XII} showed that decompositions of random ideal classes over the first  $\log^{2/3}(|\Delta|)$ 
split primes always had exponents significantly less 
than $e^{\log^{1/3}(|\Delta|)}$. 
\begin{table}[ht]
\caption{\label{tab:heuristic} Maximal exponent occurring in short decompositions (over 1000 random elements of the class group). Table~2 of~\cite{ANTS_XII}.}
\begin{center}
\begin{tabular}{|c||c|c|c|}
\hline
$\log_{10}(|\Delta|)$ & $\log^{2/3}(|\Delta|)$  & Maximal coefficient & $e^{\log^{1/3}(|\Delta|)}$  \\
\hline
20  &  13  &  6  &  36 \\ 
25  &  15  &  8  &  48 \\ 
30  &  17  &  7  &  61 \\
35  &  19  &  9  &  75 \\ 
40  &  20  &  10  &  91 \\
45  &  22  &  14  &  110 \\
50  &  24  &  13  &  130 \\
\hline
\end{tabular}
\end{center}
\end{table}

\begin{heuristic}[With parameter $c > 0$]\label{heuristic}
Let $c > 0$ and $\OO$ an imaginary quadratic order of discriminant $\Delta$. Then there are  
 $(\pg_i)_{i\leq k}$ for $k = \log^{2/3}(|\Delta|)$ split prime ideals of norm less than $\log^c(|\Delta|)$ whose 
classes generate $\Cl(\OO)$. Furthermore, each class of $\Cl(\OO)$ has a representative of the form  
$\prod_i \pg_i^{n_i}$ for $|n_i|\leq e^{\log^{1/3}|\Delta|}$.
\end{heuristic}
A default choice for our set $S$ could be the first $\log^{2/3}(|\Delta|)$ split primes of $\OO$ (as in Table~\ref{tab:heuristic}). 
We can derive our results 
under the weaker assumption that the $\log^{2/3}(|\Delta|)$ primes generating the ideal class group do not have to be 
the first consecutive primes. Assume we know that $\Cl(\OO)$ is generated by at most $\log^{2/3}(|\Delta|)$ distinct classes 
of the split prime ideals of norm up to $\log^c(|\Delta|)$ for some constant $c > 0$. Our algorithm needs to first identify these 
prime ideals as they might not be the first consecutive primes. Let $\pg_1,\ldots,\pg_k$ be the prime ideals of norm up to 
$\log^c(|\Delta|)$. We first compute a basis for the lattice $\mathcal{L}$ of vectors $(e_1,\ldots,e_k)$ such that $\prod_i\pg_i^{e_i}$ 
is principal (in other words, the ideal class $\left[\prod_i\pg_i^{e_i}\right]$ is trivial). Let $M$ be the matrix whose rows are the 
vectors of a basis of $\mathcal{L}$. There is a polynomial time (and space) algorithm that finds a unimodular matrix $U$ such that 
\[ UM = H = 
\begin{bmatrix}
 h_{1,1}& 0      & \hdots & 0      \\
\vdots & h_{2,2}& \ddots & \vdots \\
\vdots & \vdots & \ddots & 0      \\
*      & *      & \hdots & h_{k,k}\\
\end{bmatrix},
\]
where $H$ is in Hermite Normal Form~\cite{Arne}. The matrix $H$ represents the unique basis of $\mathcal{L}$ such that $h_{i,i} > 0$, and $h_{j,j} > h_{i,j}$ 
for $i > j$. Every time $h_{i,i} = 1$, this means that we have a relation of the form $\left[\pg_i\right] = \left[ \prod_{j<i}\pg_j^{-h_{i,j}}\right]$. 
In other words, $\left[\pg_i\right]\in \langle [\pg_1],\ldots, [\pg_{i-1}]\rangle$. On the other hand, if $h_{i,i}\neq 1$, then 
$\left[\pg_i\right]\notin \langle [\pg_1],\ldots, [\pg_{i-1}]\rangle$. Our algorithm proceeds by computing the HNF of $M$, and every time 
$h_{i,i}\neq 1$, it inserts $\pg_i$ at the beginning of the list of primes, and moves the column $i$ to the first column, recomputes the HNF and 
iterates the process. In the end, the first $\log^{2/3}(|\Delta|)$ primes in the list generate $\Cl(\OO)$. 
 \begin{algorithm}[ht]
   \caption{Computation of  $\log^{2/3}(|\Delta|)$ primes that generate $\Cl(\OO)$}
\begin{algorithmic}[1]\label{alg:generating_set}
\REQUIRE Order $\OO$ of discriminant $\Delta$ and $c >0$. 
\ENSURE $\log^{2/3}(|\Delta|)$ split primes whose classes generate $\Cl(\OO)$. 
\STATE $S\leftarrow\{\text{Split primes $\pg_1,\dots,\pg_k$ of norm less than }\log^c(|\Delta|)\}$.
\STATE $\mathcal{L}\leftarrow$ lattice of vectors $(e_1,\ldots,e_k)$ such that $\prod_i\pg_i^{e_i}$ 
is principal.
\STATE Compute the matrix $H\in\Z^{k\times k}$ of a basis of $\mathcal{L}$ in HNF. 
\FOR{$j = k$ down to $\log^{2/3}(|\Delta|) + 1$}
\WHILE{$h_{j,j} \neq 1$}
\STATE Insert $\pg_j$ at the beginning of $S$. 
\STATE Insert the $j$-th column at the beginning of the list of columns of $H$. 
\STATE $H\leftarrow \text{HNF} (H)$. 
\ENDWHILE
\ENDFOR
\RETURN $\{\pg_1,\ldots,\pg_s\}$ for $s =  \log^{2/3}(|\Delta|)$.
\end{algorithmic}
\end{algorithm}

\begin{proposition}
Assuming Heuristic~\ref{heuristic} for the parameter $c$, Algorithm~\ref{alg:generating_set} is correct and runs in polynomial time 
in $\log(|\Delta|)$.
\end{proposition}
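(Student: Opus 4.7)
The plan is to analyze each ingredient of Algorithm~\ref{alg:generating_set} separately---bounding the size of the candidate set, the cost of computing $\mathcal{L}$ and its HNF, and the number of iterations of the inner loop---and then to combine these using the HNF interpretation of the relation lattice to obtain correctness.

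First I would bound the size of $S$: by the prime number theorem for ideals, the initial set has cardinality $k = O(\log^{c}(|\Delta|)/\log\log(|\Delta|))$, which is polynomial in $\log(|\Delta|)$. Next, a basis for the relation lattice $\mathcal{L}\subseteq\Z^{k}$, i.e.\ the kernel of the map $\Z^{k}\to\Cl(\OO)$ sending $(e_{1},\ldots,e_{k})$ to $\left[\prod_{i}\pg_{i}^{e_{i}}\right]$, can be obtained in quantum polynomial time using the standard Hidden Subgroup Problem technique in $\Z^{k}$ that is alluded to earlier in the paper for the Bonnetain--Schrottenloher precomputation. The HNF of a $k\times k$ integer matrix can then be computed deterministically in classical polynomial time, for example by Storjohann's algorithm, so Step~3 of the algorithm itself runs in polynomial time.

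For correctness, the key input is the standard fact that, for the ordering $\pg_{1},\ldots,\pg_{k}$ currently in use, the diagonal entries of the HNF $H$ of $\mathcal{L}$ satisfy
\[
  h_{j,j} \;=\; \bigl[\langle [\pg_{1}],\ldots,[\pg_{j}]\rangle : \langle [\pg_{1}],\ldots,[\pg_{j-1}]\rangle \bigr],
\]
so that $h_{j,j}=1$ precisely when $[\pg_{j}]\in\langle [\pg_{1}],\ldots,[\pg_{j-1}]\rangle$. The FOR loop enforces exactly this condition for every $j>\log^{2/3}(|\Delta|)$: it exits index $j$ only once $h_{j,j}=1$, at which point the currently-$j$-th prime is redundant given the ones at positions $1,\ldots,j-1$. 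Since $\prod_{i}h_{i,i}=\#\Cl(\OO)$ and the trailing $k-\log^{2/3}(|\Delta|)$ diagonal entries are all equal to $1$ at termination, the classes of the first $\log^{2/3}(|\Delta|)$ primes in the final list must generate $\Cl(\OO)$, which is the required output.

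The hard part will be controlling the total number of WHILE iterations, and this is where Heuristic~\ref{heuristic} becomes essential. Each execution of the WHILE body moves to the front of the list a prime whose class is not in the subgroup generated by the primes already ahead of it, so the front block strictly grows in ``rank'' (in the invariant-factor sense) at every iteration. By Heuristic~\ref{heuristic}, there exist $\log^{2/3}(|\Delta|)$ primes in $S$ whose classes already generate $\Cl(\OO)$, so the number of invariant factors of $\Cl(\OO)$ is itself at most $\log^{2/3}(|\Delta|)$. Hence the front block can never accumulate more than $\log^{2/3}(|\Delta|)$ ``genuinely new'' primes, which bounds the total number of WHILE iterations across the entire FOR loop by $\log^{2/3}(|\Delta|)$. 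Combined with the polynomial cost of each HNF recomputation, this yields the claimed polynomial runtime in $\log(|\Delta|)$.
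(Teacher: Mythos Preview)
Your overall structure---bounding $|S|$, computing the relation lattice $\mathcal{L}$ quantumly, invoking a polynomial-time HNF routine, and reading off correctness from the interpretation $h_{j,j}=[\langle[\pg_1],\ldots,[\pg_j]\rangle:\langle[\pg_1],\ldots,[\pg_{j-1}]\rangle]$---matches the paper's proof. The paper cites the Biasse--Song $S$-unit algorithm for Step~2 rather than a generic abelian HSP solver in $\Z^k$, but these are essentially the same device.

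The substantive difference, and the place where your argument has a genuine gap, is the bound on the number of WHILE iterations. The paper argues \emph{per index}: for each fixed $j>\log^{2/3}(|\Delta|)$ the WHILE body is entered at most $j$ times, because among $[\pg_1],\ldots,[\pg_j]$ at least one lies in the subgroup generated by the remaining $j-1$; summing over $j$ gives an $O(k^2)$ bound. You instead claim a \emph{global} bound of $\log^{2/3}(|\Delta|)$ iterations via a rank-growth argument, and this does not go through. When $\pg_j$ is moved to the front, the prime formerly at position $j-1$ slides to position $j$, so whatever ``front block'' you have in mind does not monotonically enlarge as a set. More fundamentally, the condition $[\pg_j]\notin\langle[\pg_1],\ldots,[\pg_{j-1}]\rangle$ is a chain condition, not an independence condition: in a finite abelian group a sequence $g_1,g_2,\ldots$ with $g_i\notin\langle g_1,\ldots,g_{i-1}\rangle$ can have length up to $\log_2|\Cl(\OO)|$, far exceeding the number of invariant factors (e.g.\ in $\Z/6\Z$, which has one invariant factor, the pair $g_1=2$, $g_2=3$ already gives such a chain of length $2$). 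Worse, within a single WHILE loop the $j$ primes are permuted cyclically, so after $j$ iterations the same prime recurs at position $j$; nothing in your argument rules out this cycling. Your conclusion that the total number of WHILE iterations is at most $\log^{2/3}(|\Delta|)$ is therefore not justified; you should revert to the paper's per-$j$ count, which still yields a polynomial bound.
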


\begin{proof}
Step~2 can be done in quantum polynomial time with the $S$-unit algorithm of Biasse and Song~\cite{SODA16}. 
Assuming that $\log^{2/3}(|\Delta|)$ primes of norm less than $\log^c(|\Delta|)$ generate $\Cl(\OO)$, the loop of Steps~5 to~9 
is entered at most $j$ times as one of $[\pg_1],\ldots,[\pg_j]$ must be in the subgroup generated by the other $j-1$ ideal classes. 
The HNF computation runs in polynomial time, therefore the whole procedure runs in polynomial time. \qed
\end{proof}

Once we have $\pg_1,\ldots,\pg_s$, we compute a reduced basis $B'$ of the lattice 
$\mathcal{L}\subseteq \Z^s$ of the vectors $(e_1,\ldots,e_s)$ such that $\left[ \prod_i\pg_i^{e_i}\right]$ is trivial, and 
we compute the generators $\g_1,\ldots,\g_l$ such that $\Cl(\OO) = \langle\g_1\rangle\times\cdots\times\langle\g_l\rangle$ 
together with vectors $\vec{v}_i$ such that $\g_i = \prod_j \pg_j^{v_{i,j}}$. 

 \begin{algorithm}[ht]
   \caption{Precomputation for the oracle}
\begin{algorithmic}[1]\label{alg:precomputation}
\REQUIRE Order $\OO$ of discriminant $\Delta$ and $c >0$. 
\ENSURE Split prime ideals $\pg_1,\ldots,\pg_s$ whose classes generate $\Cl(\OO)$ where $s = \log^{2/3}(|\Delta|)$, reduced 
basis $B'$ of the lattice $\mathcal{L}$ of vectors $(e_1,\ldots,e_s)$ such that $\left[ \prod_i\pg_i^{e_i}\right]$ is trivial, 
generators $\g_1,\ldots,\g_l$ such that $\Cl(\OO) = \langle\g_1\rangle\times\cdots\times\langle\g_l\rangle$ and 
vectors $\vec{v}_i$ such that $\g_i = \prod_j \pg_j^{v_{i,j}}$.
\STATE $\pg_1,\ldots,\pg_s\leftarrow$ output of Algorithm~\ref{alg:generating_set}.
\STATE $\mathcal{L}\leftarrow$ lattice of vectors $(e_1,\ldots,e_s)$ such that $\prod_i\pg_i^{e_i}$ 
is principal.
\STATE Compute a BKZ-reduced matrix $B'\in\Z^{s\times s}$ of a basis of $\mathcal{L}$ with block size $\log^{1/3}(|\Delta|)$. 
\STATE Compute $U,V\in\operatorname{Gl}_s(\Z)$ such that $UB'V = \operatorname{diag}(d_1,\ldots,d_s)$ is the Smith Normal Form of $B'$. 
\STATE $l\leftarrow\min_{i\leq s}\{i\ \mid\ d_i\neq 1\}$. For $i\leq l$, $\vec{v_i}\leftarrow$ $i$-th column of $V$. 
\STATE $V'\leftarrow V^{-1}$. For $i\leq l$, $\g_i\leftarrow \prod_{j\leq s}\pg_j^{v'_{i,j}}$. 
\RETURN $\{\pg_1,\ldots,\pg_s\}$, $B'$, $\{\g_1,\ldots,\g_l\}$, $\{\vec{v}_1,\ldots,\vec{v}_l\}$. 
\end{algorithmic}
\end{algorithm}

\begin{lemma}
Let $\mathcal{L}$ be an $n$-dimentional lattice with input basis $B\in\Z^{n\times n}$, and let $\beta < n$ be a block size. Then 
the BKZ variant of~\cite{term_BKZ} used with Kannan's enumeration technique~\cite{Kannan_SVP} returns a basis $\vec{b'}_1,\ldots,\vec{b'}_n$ such that 
$$\|\vec{b'}_1\| \leq e^{\frac{n}{\beta}\ln(\beta)\left(1+o(1)\right)}\lambda_1\left(\mathcal{L}\right),$$ 
using time $\operatorname{Poly}(\operatorname{Size}(B))\beta^{\beta\left(\frac{1}{2e} + o(1)\right)}$ 
and polynomial space. 
\end{lemma}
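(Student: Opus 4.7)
The plan is to combine two well-known results from the literature: the terminating BKZ analysis of~\cite{term_BKZ}, which bounds the output quality as a function of the block size $\beta$ and the Hermite constant $\gamma_\beta$, and Kannan's enumeration algorithm~\cite{Kannan_SVP} which solves SVP exactly in polynomial space within the claimed time bound. Neither ingredient is re-proved; the work is in checking that the two fit together and in massaging the exponents into the form stated.

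First I would invoke the terminating BKZ result of~\cite{term_BKZ}. That paper shows that, after a number of tours which is polynomial in the bit-size of the input basis $B$, the BKZ-$\beta$ algorithm returns a basis $\vec{b'}_1,\ldots,\vec{b'}_n$ satisfying an approximation bound of the form
\[
\|\vec{b'}_1\| \leq \gamma_\beta^{\frac{n-1}{\beta-1}\left(1+o(1)\right)}\,\lambda_1(\mathcal{L}),
\]
where $\gamma_\beta$ denotes Hermite's constant in dimension $\beta$. Using the classical upper bound $\gamma_\beta \leq \beta$ (Minkowski, sharpened by Blichfeldt), the exponent becomes $\frac{n-1}{\beta-1}\ln(\beta)(1+o(1))$, which is asymptotically $\frac{n}{\beta}\ln(\beta)(1+o(1))$, yielding the stated approximation factor $e^{\frac{n}{\beta}\ln(\beta)(1+o(1))}$.

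Next I would account for the cost. Each BKZ tour performs $O(n)$ calls to an SVP oracle on a lattice of dimension at most $\beta$, preceded and followed by cheap size-reduction and local basis updates, all of which run in time polynomial in $\operatorname{Size}(B)$ by the standard analysis (with intermediate bit-lengths controlled via LLL-type size-reduction, see~\cite{term_BKZ}). For the SVP oracle I would plug in Kannan's enumeration~\cite{Kannan_SVP}, which solves SVP in dimension $\beta$ in time $\beta^{\beta\left(\frac{1}{2e}+o(1)\right)}$ and polynomial space; the polynomial-space property is preserved when the enumeration is composed with the outer BKZ loop since BKZ itself only needs to store the current basis. Multiplying the polynomially many oracle calls by the enumeration cost gives the overall bound $\operatorname{Poly}(\operatorname{Size}(B))\cdot\beta^{\beta\left(\frac{1}{2e}+o(1)\right)}$.

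The only subtlety, and hence the main thing I would be careful about, is the interplay between polynomial-time termination of BKZ and polynomial-space use. The terminating BKZ of~\cite{term_BKZ} was designed exactly to guarantee polynomially many tours together with the approximation factor above, so it suffices to confirm that (i) its tour-counting argument is oblivious to which SVP subroutine is plugged in, and (ii) the intermediate bases produced during BKZ remain of polynomial bit-size, which follows from standard size-reduction inside each tour. Once these two points are checked, the stated time and space bounds follow directly by composition, completing the proof.
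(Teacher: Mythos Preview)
Your proposal is correct and follows essentially the same route as the paper: invoke the approximation bound from the terminating BKZ analysis in terms of $\gamma_\beta$, simplify using a linear upper bound on $\gamma_\beta$, and multiply the polynomially many SVP calls by the cost of Kannan's enumeration. The only point to correct is the attribution of the $\beta^{\beta(1/(2e)+o(1))}$ running time: Kannan's original paper gives a weaker exponent, and the $1/(2e)$ constant comes from the refined analysis of Hanrot and Stehl\'e, which is what the paper cites for this step.
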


\begin{proof}
According to~\cite[Th. 1]{term_BKZ}, $\|\vec{b'}_1\|\leq 4\left(\gamma_\beta\right)^{\frac{n-1}{\beta-1} + 3}\lambda_1\left(\mathcal{L}\right)$ where 
$\gamma_\beta$ is the Hermite constant in dimension $\beta$. As asymptotically $\gamma_\beta\leq \frac{1.744\beta}{2\pi e}(1+o(1))$ (see~\cite{bound_Hemite}), 
we get that $4\left(\gamma_\beta\right)^{\frac{n-1}{\beta-1} + 3} \leq e^{\frac{n}{\beta}\ln(\beta)\left(1+o(1)\right)}$. Moreover, this reduction is 
obtained with a number of calls to Kannan's algorithm that is bounded by $\operatorname{Poly}(n,\operatorname{Size}(B))$. According to~\cite[Th. 2]{stehle_enum}, 
each of these calls takes time $\operatorname{Poly}(n,\operatorname{Size}(B))\beta^{\beta\left(\frac{1}{2e} + o(1)\right)}$ and polynomial space, which 
terminates the proof.\qed
\end{proof}

\begin{proposition}
Assuming Heuristic~\ref{heuristic} for $c$, Algorithm~\ref{alg:precomputation} is correct, runs in time $e^{\tilde{O}\left(\sqrt[3]{\log(|\Delta|)}\right)}$ 
and has polynomial space complexity. 
\end{proposition}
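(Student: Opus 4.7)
The plan is to verify correctness first, then account for the time and space of each of the six steps, isolating Step~3 as the only non-trivial cost. Under Heuristic~\ref{heuristic} with parameter $c$, the previous proposition guarantees that Step~1 returns in polynomial time a family $\pg_1,\ldots,\pg_s$ with $s=\log^{2/3}(|\Delta|)$ whose classes generate $\Cl(\OO)$. Consequently the evaluation map $\varphi\colon\Z^s\to\Cl(\OO)$ sending $(e_1,\ldots,e_s)$ to $\bigl[\prod_i\pg_i^{e_i}\bigr]$ is surjective with kernel $\mathcal{L}$, so $\Cl(\OO)\simeq\Z^s/\mathcal{L}$.

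For the correctness of Steps~4--6, I would invoke the standard theory of the Smith Normal Form: the identity $UB'V=\operatorname{diag}(d_1,\ldots,d_s)$ with $U,V$ unimodular exhibits an isomorphism $\Z^s/\mathcal{L}\simeq \bigoplus_{i=1}^s\Z/d_i\Z$, and the columns of $V$ express the preimages of the cyclic generators in the original coordinates. Hence, for $i\geq l$ (the first index where $d_i\neq 1$), the vector $\vec{v}_i$ extracted from the $i$-th column of $V$ is a representative of a generator, and the corresponding ideal $\g_i=\prod_j\pg_j^{v'_{i,j}}$ obtained from $V^{-1}$ then satisfies $\Cl(\OO)=\langle\g_1\rangle\times\cdots\times\langle\g_l\rangle$, as required.

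For the complexity, Step~2 runs in quantum polynomial time via the $S$-unit algorithm of Biasse and Song~\cite{SODA16}, and Steps~4--6 consist of HNF/SNF computations, a unimodular inversion, and matrix--vector operations on integer matrices of dimension $s\in\operatorname{Poly}(\log(|\Delta|))$ with entries of polynomially bounded bit size; these all fit in polynomial time and space. The bottleneck is Step~3, the BKZ reduction of $\mathcal{L}$ with block size $\beta=\log^{1/3}(|\Delta|)$ in dimension $s=\log^{2/3}(|\Delta|)$. Plugging these choices into the preceding lemma yields a time bound
\[
\operatorname{Poly}(\operatorname{Size}(B))\cdot\beta^{\beta\left(\frac{1}{2e}+o(1)\right)}=\operatorname{Poly}(\log(|\Delta|))\cdot e^{\frac{1}{6e}\log^{1/3}(|\Delta|)\,\ln\log(|\Delta|)(1+o(1))},
\]
which is $e^{\tilde{O}\left(\sqrt[3]{\log(|\Delta|)}\right)}$, together with polynomial space.

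The main obstacle is ensuring that the BKZ cost fits within the target bound $e^{\tilde{O}(\sqrt[3]{\log(|\Delta|)})}$, and this is precisely why the block size is tuned to $\log^{1/3}(|\Delta|)$: it balances the $\beta^{\beta/(2e)}$ enumeration cost against the subexponential lattice dimension $s$. The preceding lemma, together with the poly-space guarantee of the terminating BKZ variant of~\cite{term_BKZ} driven by Kannan's enumeration~\cite{Kannan_SVP}, closes this gap without difficulty; all other steps are standard and contribute only polynomial overhead.
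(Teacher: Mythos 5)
The paper itself states this proposition without a proof, so there is nothing to compare against directly; the authors evidently regarded it as an immediate consequence of the preceding lemma on terminating BKZ and of the earlier proposition on Algorithm~\ref{alg:generating_set}. Your argument is exactly the intended one: correctness via the Smith Normal Form isomorphism $\Z^s/\mathcal{L}\simeq\bigoplus_i\Z/d_i\Z$, quantum polynomial time for the $S$-unit/lattice computation of Step~2 by Biasse--Song, polynomial time and space for the SNF bookkeeping in Steps~4--6, and the BKZ cost $\operatorname{Poly}(\operatorname{Size}(B))\cdot\beta^{\beta(1/(2e)+o(1))}$ with $\beta=\log^{1/3}(|\Delta|)$ giving $e^{\tilde{O}(\sqrt[3]{\log(|\Delta|)})}$ time and polynomial space. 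Your explicit exponent computation $e^{\frac{1}{6e}\log^{1/3}(|\Delta|)\ln\log(|\Delta|)(1+o(1))}$ is correct. The only cosmetic point is that you write the generator indices as $i\geq l$ where the pseudocode says $i\leq l$; under the standard increasing-divisibility SNF convention your reading is the sensible one, so this looks like a harmless typo in the paper's algorithm rather than a flaw in your proof.
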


The precomputation of Algorithm~\ref{alg:precomputation} allows us to design the quantum circuit that implements 
the function described in~\eqref{eq:HSP_oracle_A}. 
Generic techniques due to Bennett \cite{Ben89}
convert any algorithm taking time $T$ and space $S$
into a reversible algorithm taking time $T^{1+\epsilon}$ and space $O(S\log T)$. From a high level point of view, this is simply the 
adaptation of method of Biasse-Fieker-Jacobson~\cite[Alg. 7]{ANTS_XII} to the quantum setting. 

 \begin{algorithm}[ht]
   \caption{Quantum oracle for implementing $f$ defined in~\eqref{eq:HSP_oracle_A}}
\begin{algorithmic}[1]\label{alg:oracle}
\REQUIRE Curves $E_1,E_2$. 
Order $\OO$ of discriminant $\Delta$ such that $\End(E_i)\simeq \OO$ for $i=1,2$. 
Split prime ideals $\pg_1,\ldots,\pg_s$ whose classes generate $\Cl(\OO)$ where $s = \log^{2/3}(|\Delta|)$, reduced 
basis $B'$ of the lattice $\mathcal{L}$ of vectors $(e_1,\ldots,e_s)$ such that $\left[ \prod_i\pg_i^{e_i}\right]$ is trivial, 
generators $\g_1,\ldots,\g_l$ such that $\Cl(\OO) = \langle\g_1\rangle\times\cdots\times\langle\g_l\rangle$ and 
vectors $\vec{v}_i$ such that $\g_i = \prod_j \pg_j^{v_{i,j}}$. Ideal class $[\ag_y] \in\Cl(\OO)$ represented by the vector 
$\vec{y}=(y_1,\ldots,y_l)\in \Z/d_1\Z\times\cdots \times\Z/d_l\Z\simeq \Cl(\OO)$, and $x\in\Z/2\Z$. 
\ENSURE $f(x,\vec{y})$. 
\STATE $\vec{y}\leftarrow \sum_{i\leq l} y_i\vec{v_i}\in\Z^s$ (now $[\ag_y] = \left[ \prod_i\pg_i^{y_i}\right]$). 
\STATE Use Babai's nearest plane method with the  basis $B'$ to find $\vec{u}\in\mathcal{L}$ close to $\vec{y}$. 
\STATE $\vec{y}\leftarrow \vec{y} - \vec{u}$. 
\STATE \textbf{If} $x=0$ \textbf{then} $\overline{E}\leftarrow \overline{E}_1$ \textbf{else} $\overline{E}\leftarrow \overline{E}_2$.
\FOR{$i\leq s$}
\FOR{$j\leq y_i$}
\STATE $\overline{E}\leftarrow [\pg_i]*\overline{E}$. 
\ENDFOR
\ENDFOR
\RETURN $\ket{\overline{E}}$. 
\end{algorithmic}
\end{algorithm}

\begin{proposition}
Assuming Heuristic~\ref{heuristic} for $c$, Algorithm~\ref{alg:oracle} is correct and runs in quantum time $e^{\tilde{O}\left(\sqrt[3]{\log(|\Delta|)}\right)}$ and has polynomial 
space complexity. 
\end{proposition}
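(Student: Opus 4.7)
My plan is to establish correctness, bound the classical running time through three pieces (linear algebra, lattice decoding, and the isogeny walk), bound the space, and then invoke Bennett's reversibilization to pass from a classical algorithm to a reversible quantum oracle of essentially the same cost.

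For correctness, observe that after Step~1 we have $[\ag_y] = \left[\prod_i \pg_i^{y_i}\right]$ by construction of the vectors $\vec{v}_i$ from the Smith Normal Form in Algorithm~\ref{alg:precomputation}. Since $\mathcal{L}$ is by definition the kernel of the map $(e_1,\ldots,e_s)\mapsto \left[\prod_i \pg_i^{e_i}\right]$, subtracting any $\vec{u}\in\mathcal{L}$ in Step~3 preserves the ideal class. Hence the loop in Steps~5--9 outputs $[\ag_y]*\overline{E}_1$ or $[\ag_y]*\overline{E}_2$ according to $x$, which is exactly $f(x,\vec{y})$.

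The critical quantity in the complexity analysis is the maximum absolute value of the components of $\vec{y}$ after Step~3. By Heuristic~\ref{heuristic}, every coset of $\Z^s/\mathcal{L}$ has a representative with entries bounded by $e^{\log^{1/3}(|\Delta|)}$, so in particular $\lambda_1(\mathcal{L})\leq \sqrt{s}\cdot e^{\log^{1/3}(|\Delta|)}$ and the covering radius of $\mathcal{L}$ is bounded by the same quantity up to polynomial factors. With $s=\log^{2/3}(|\Delta|)$ and block size $\beta=\log^{1/3}(|\Delta|)$, the preceding lemma yields $\|\vec{b'}_1\|\leq e^{(s/\beta)\ln(\beta)(1+o(1))}\lambda_1(\mathcal{L}) = e^{\tilde{O}(\log^{1/3}(|\Delta|))}$, and the standard propagation to the remaining Gram--Schmidt norms of the BKZ-reduced basis gives the same bound. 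Babai's nearest plane algorithm then returns $\vec{u}\in\mathcal{L}$ with $|y_i-u_i|\leq e^{\tilde{O}(\log^{1/3}(|\Delta|))}$ for every $i$, so after Step~3 every component $y_i$ is bounded by $e^{\tilde{O}(\log^{1/3}(|\Delta|))}$ in absolute value. The loop in Steps~5--9 therefore performs at most $\sum_i |y_i|\leq s\cdot e^{\tilde{O}(\log^{1/3}(|\Delta|))} = e^{\tilde{O}(\log^{1/3}(|\Delta|))}$ individual actions $[\pg_i]*\overline{E}$, and each such action is an $\ell$-isogeny with $\ell=\Nm(\pg_i)\leq \log^c(|\Delta|)$, computable classically in polynomial time via V\'{e}lu's formulas. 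Steps~1--3 add only polynomial overhead since they consist of linear-algebraic operations on a basis of polynomial bit-size, and all working data have polynomial bit-size, so the full classical procedure runs in time $e^{\tilde{O}(\log^{1/3}(|\Delta|))}$ and polynomial space.

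To obtain the quantum oracle I would apply Bennett's reversibilization, which converts a deterministic classical algorithm of time $T$ and space $S$ into a reversible one in time $T^{1+o(1)}$ and space $O(S\log T)$; here this preserves the $e^{\tilde{O}(\log^{1/3}(|\Delta|))}$ time bound and keeps the space polynomial in $\log(|\Delta|)$. The most delicate part will be controlling the infinity norm of the Babai output: one has to combine the BKZ guarantee on $\|\vec{b'}_1\|$ with a careful propagation to the remaining Gram--Schmidt norms, and invoke Heuristic~\ref{heuristic} to avoid the generic $2^{s/2}$ approximation factor that would otherwise arise from Babai on a merely LLL-reduced basis. This is precisely where both the heuristic and the choice $\beta=\log^{1/3}(|\Delta|)$ intervene.\qed
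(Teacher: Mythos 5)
Your proof takes essentially the same route as the paper's: bound the coefficient vector after Babai's nearest plane, then count isogeny steps, then reversibilize via Bennett. The paper, however, is much terser and delegates the key quantitative claim (that the $y_i$ after Step~3 are in $e^{\tilde{O}(\log^{1/3}|\Delta|)}$) entirely to~\cite[Prop.~6.2]{ANTS_XII}, stating only that Heuristic~\ref{heuristic} supplies the existence of a lattice point within distance $e^{\log^{1/3}(|\Delta|)(1+o(1))}$ of the target. Your reconstruction of that bound is the right skeleton but has two soft spots worth naming. First, $\lambda_1(\mathcal{L})\leq\sqrt{s}\,e^{\log^{1/3}|\Delta|}$ does not follow directly from Heuristic~\ref{heuristic} (the zero class has the representative $\vec 0$); one needs a pigeonhole argument comparing the number of short exponent vectors to $h_\OO$, or simply the covering-radius bound, which you also state and which is the one the paper actually uses. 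Second, ``the standard propagation to the remaining Gram--Schmidt norms of the BKZ-reduced basis gives the same bound'' is not a standard fact: BKZ controls the \emph{decay} of the GS norms, not an upper bound on all of them in terms of $\|\vec b_1'\|$ alone, and the clean route is to view Babai on the BKZ basis as approximating CVP within factor $e^{(s/\beta)\ln\beta(1+o(1))}$, applied to a target at distance bounded by the covering radius. You flag this as the delicate step yourself, and indeed it is precisely the content of the cited Prop.~6.2. With that filled in, your proposal matches the paper's argument, including the correctness observation and the Bennett reversibilization (which the paper discusses in the text just before Algorithm~\ref{alg:oracle} rather than inside the proof).
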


\begin{proof}
Each group action of Step~7 is polynomial in $\log(p)$ and in $\Nm(\pg_i)$. Moreover, following the arguments from the 
Biasse-Fieker-Jacobson method~\cite[Prop. 6.2]{ANTS_XII}, the $y_i$ are in $e^{\tilde{O}\left(\sqrt[3]{\log(|\Delta|)}\right)}$, which 
is the cost of Steps~5 to~9. The main observation allowing us to reduce the search to a close vector to the computation of 
a BKZ-reduced basis is that Heuristic~\ref{heuristic} gives us the promise that there is $\vec{u}\in\mathcal{L}$ at distance less 
than $e^{\sqrt[3]{\log(|\Delta|)}(1+o(1))}$ from $\vec{y}$.\qed
\end{proof}

\begin{corrolary}
Let $E_1,E_2$ be two elliptic curves and $\OO$ be an imaginary quadratic 
order of discriminant $\Delta$ such that $\End(E_i)\simeq \OO$ for $i=1,2$. 
Then assuming Heuristic~\ref{heuristic} for some constant $c>0$, there is a quantum algorithm for computing $[\ag]$ such that 
$[\ag]*\overline{E}_1 = \overline{E}_2$ with: 
\begin{itemize}
 \item heuristic time complexity $e^{O\left(\sqrt{\log(|\Delta|)}\right)}$, polynomial quantum memory and $e^{O\left(\sqrt{\log(|\Delta|)}\right)}$ 
 quantumly accessible classical memory,
 \item heuristic time complexity $e^{\left(\frac{1}{\sqrt{2}}+o(1)\right)\sqrt{\ln(|\Delta|)\ln\ln(|\Delta|)}}$ with polynomial memory (both classical 
and quantum). 
\end{itemize}
\end{corrolary}

\begin{remark}
Heuristic~\ref{heuristic} may be relaxed in the proof of the $e^{O\left(\sqrt{\log(|\Delta|)}\right)}$ asymptotic run time. As long as a number $k$ in $\tilde{O}\left( \log^{1-\varepsilon}(|\Delta|)\right)$ of prime ideals of polynomial norm generate the ideal class group and that each class has at least one decomposition involving exponents bounded 
by $e^{\tilde{O}\left( \log^{1/2 - \varepsilon}(|\Delta|)\right)}$, the result still holds by BKZ-reducing with block size 
$\beta = \sqrt{k}$. We refered to Heuristic~\ref{heuristic} as it was already present in the previous work of Biasse, Fieker and Jacobson~\cite{ANTS_XII}. 

For the poly-space variant, these conditions can be relaxed even further. It is known under GRH that a 
number $k$ in $\tilde{O}\left( \log(|\Delta|)\right)$ 
of prime ideals of norm less than $12\log^2(|\Delta|)$ 
generate the ideal class group. We only need to argue that 
each class can be decomposed with exponents bounded 
by $e^{\tilde{O}\left(\sqrt{\log(|\Delta|)}\right)}$. 
Then by using the oracle of Algorithm~\ref{alg:oracle} 
with block size $\beta = \sqrt{k}$, we get 
a run time of $e^{\tilde{O}\left(\sqrt{\log(|\Delta|)}\right)}$ with 
a poly-space requirement. 
\end{remark}

\section{A Remark on Subgroups}
It is well-known that the cost of quantum and classical attacks on isogeny based cryptosystems is more accurately measured in the size of the subgroup generated by the ideal classes used in the cryptosystem.  As stated in \cite{CSIDH}, in order to ensure that this is sufficiently large with high probability, 
the class number $N$ must have a large divisor $M$, meaning that there is a subgroup of order $M$ in the class group.  Assuming the Cohen-Lenstra heuristics, Hamdy and Saidak \cite[Sec.\ 3]{HamdySaidak_smoothness} prove that the smoothness probability for class numbers is essentially the same as for random integers of the same size.  Thus, for randomly-selected CSIDH system parameters, we expect that the class number will have a large prime divisor.  It is an open problem as to whether subgroups can be exploited to reduce the security of CSIDH, but there are nevertheless some minor considerations that can be taken into account to minimize the risk.

Constructing system parameters for which the class number has a large known divisor could be done by a quantum adversary using the polynomial-time algorithm to compute the class group and trial-and-error.  Using classical computation, it is in most cases infeasible.  Known methods to construct discriminants for which the class number has a known divisor $M$ use a classical result of Nagell \cite{Nagell_divisibility} relating the problem to finding discriminants $\Delta = c^2 D$ that satisfy $c^2 D = a^2 - 4 b^M$ for integers $a,b,c.$  These methods thus produce discriminants that are exponential in $M$, too large for practical purposes. 

The one exception where classical computation can be used to find class numbers with a large known divisor is when the divisor $M = 2^k.$
Bosma and Stevenhagen \cite{BosmaStevenhagen_2sylow} give an algorithm, formalizing methods described by Gauss \cite{Gauss} and Shanks \cite{Shanks_GATESR}, to compute the 2-Sylow subgroup of the class group of a quadratic field.  In addition to describing an algorithm that works in full generality, they prove that the algorithm runs  in expected time polynomial in $\log (|\Delta|)$.  Using this algorithm would enable an adversary to use trial-and-error efficiently to generate random primes $p$ until a sufficiently large power of $2$ divides the class number.

The primes $p$ recommended for use with CSIDH are not amenable to this method, because they are congruent to $3 \bmod 4,$ guaranteeing that the class number of the non-maximal order of discriminant $-4p$ is odd.  However, in Section~4 of \cite{CSIDH}, the authors write that they pick $p \equiv 3 \pmod{4}$ because it makes it easy to write down a supersingular curve, but that ``in principle, this constraint is not necessary for the theory to work".  We suggest that restricting to primes $p \equiv 3 \pmod{4}$ is in fact necessary, in order to avoid unnecessary potential vulnerabilities.

\section{Conclusion}

We described two variants of a  quantum algorithm for computing an isogeny between two elliptic curves $E_1,E_2$ 
defined over a finite field such that there is an imaginary quadratic order $\OO$ 
satisfying $\OO\simeq \End(E_i)$ for $i = 1,2$ with $\Delta = \disc(\OO)$. Our first variant runs in 
in heuristic asymptotic run time $2^{O\left(\sqrt{\log(|\Delta|)}\right)}$ and 
requires polynomial quantum memory and $2^{O\left(\sqrt{\log(|\Delta|)}\right)}$ quantumly accessible classical memory. 
The second variant of our algorithm relying on Regev's 
dihedral HSP solver~\cite{regev} runs in time $e^{\left(\frac{1}{\sqrt{2}}+ o(1)\right)\sqrt{\ln(|\Delta|)\ln\ln(|\Delta|)}}$ while relying only on 
polynomial (classical and quantum) memory. These variants of the HSP-based algorithms for computing isogenies have the best asymptotic 
complexity, but we left the assessment of their actual cost on specific instances such as the proposed CSIDH parameters~\cite{CSIDH} for 
future work. Some of the constants involved in lattice reduction were not calculated, and more importantly, the role of the memory requirement 
should be addressed in light of the recent results on the topic~\cite{Isogeny_memory_cost}. 

\section*{Acknowledgments}

The authors thank L\'{e}o Ducas for useful comments on the memory requirements of the BKZ algorithm. The authors also thank Tanja Lange and Benjamin Smith for useful comments 
on an earlier version of this draft. 

%
%
\bibliographystyle{plain}
\bibliography{biblio}

\end{document}